\begin{document}


\markboth{V. K. Chandrasekar, A. Durga Devi and M. Lakshmanan}
{Recursive Generation of Isochronous Hamiltonian Systems}

%
\catchline{1}{1}{2009}{}{}
%
\copyrightauthor{M. LAKSHMANAN}

\title{RECURSIVE  GENERATION OF ISOCHRONOUS HAMILTONIAN SYSTEMS}

\author{\bf V. K. CHANDRASEKAR$^{\dag}$, A. DURGA DEVI$^*$,  M. LAKSHMANAN$^{\ddag}$}

\address{Centre for Nonlinear Dynamics,\\ School of Physics,\\
Bharathidasan University,\\ Tiruchirappalli - 620 024, India.\\
\email{$^{\dag}$sekar@cnld.bdu.ac.in\\$^*$durga@cnld.bdu.ac.in\\$^{\ddag}$lakshman@cnld.bdu.ac.in}}





\maketitle

\begin{abstract}
We propose a simple procedure to identify the collective coordinate $Q$ which is used to generate the isochronous Hamiltonian. The new isochronous Hamiltonian generates more and more isochronous oscillators, recursively.
\end{abstract}



In recent years considerable interest has been shown to identify and classify isochronous systems. In this direction Calogero and his coworkers have introduced a number of systematic procedures to generate isochronous oscillator systems\cite{Calogero:08,Calogero:08f,Calogero:08g,Calogero:08d,Calogero:08c,Calogero:07,Calogero:08a}.  In a different direction the existence of amplitude independent frequency of nonlinear oscillators have been identified using nonlocal transformations\cite{vkc:09,vkc:09a}.  Recently
 Calogero and Leyvraz \cite{Calogero:07,Calogero:08a} proposed a new powerful technique to generate isochronous Hamiltonian systems. In this technique they have shown that the real autonomous Hamiltonian $H(p,q)$ can be transformed to an $\Omega$-modified Hamiltonian, that is, $H_{(1)}=\frac{1}{2}(H(p,q)^2+\Omega^2 Q(p,q)^2)$, which has the isochronous property. Here $H$ behaves as the new momentum and $Q$ is the canonically conjugate/collective coordinate conjugate to the Hamiltonian, such that the Poisson bracket \{H,Q\}=1. $\Omega$ is an arbitrary constant. Due to the nature of the $\Omega$-modified Hamiltonian system, now the new momentum $H$ and coordinate $Q$ evolve periodically with period $T=2\pi/\Omega$, and so the momentum $p$ and coordinate $q$ also evolve periodically with the same period. In \cite{Guha:09}, the authors have shown the interesting connection between symplectic rectification and isochronous Hamiltonian systems.

In this brief communication, we propose a simple procedure to identify the collective coordinate $Q$ which is used to generate the isochronous Hamiltonian. We also point out the further interesting possibility to  generate recursively more isochronous oscillators from the newly constructed  isochronous Hamiltonian. We also illustrate this possibility with an example. To identify $Q$ we start with the following theorem:

\newtheorem{thm}{Theorem}
\begin{thm} If $H=H(p,q)$ is the Hamiltonian of a given system such that it can be inverted to find a single valued $q$ or $p$ in terms of the other variable and $H$ explicitly, then the system admits at least one integral of the form $I=-t+Q(p,q)$, and $Q(p,q)$ is a collective coordinate conjugate to the Hamiltonian in an appropriate phase space (avoiding multivaluedness and singularities).
\end{thm}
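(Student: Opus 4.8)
The plan is to construct $Q$ explicitly from the Hamiltonian flow and then verify directly that $\{H,Q\}=1$, since this identity together with the equation of motion $\dot Q=\{H,Q\}$ forces $\dot Q=1$ along every orbit and hence makes $I=-t+Q$ a constant of the motion.

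First, since $H$ is autonomous it is conserved along trajectories, so I would fix the energy $E=H(p,q)$ and use the stated invertibility hypothesis to solve $p=P(q,E)$ as a single-valued function (the case in which $q$ is solved in terms of $p$ and $H$ is treated symmetrically). Differentiating the identity $H(P(q,E),q)=E$ with respect to $E$ at fixed $q$ gives $\partial_E P=1/(\partial_p H)$, which is precisely the factor appearing in Hamilton's equation $\dot q=\partial_p H$; thus the elapsed time obeys $dt=dq/\partial_p H$.

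Next, I would define the collective coordinate by
\[
Q(p,q)=\int^{q}\frac{dq'}{\partial_p H\bigl(P(q',E),q'\bigr)}\bigg|_{E=H(p,q)},
\]
that is, the elapsed-time integral with the energy slot filled in by the actual value $H(p,q)$. Reading $Q$ as a function of $(q,E)$ one has $\partial_q Q|_E=\partial_E P$. Passing back to the $(p,q)$ variables by the chain rule through $E=H(p,q)$, the terms proportional to $\partial_E Q$ cancel in the bracket, leaving $\{H,Q\}=\partial_p H\,\partial_q Q|_E=\partial_p H\cdot\bigl(1/\partial_p H\bigr)=1$. Hence $Q$ is canonically conjugate to $H$, and along any orbit $\dot Q=\{H,Q\}=1$, so $I=-t+Q$ satisfies $\dot I=0$ and is the asserted integral.

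I expect the main obstacle to be exactly the caveat in the statement: global single-valuedness and the zeros of $\partial_p H$. The integrand blows up at turning points where $\partial_p H=0$, and $p=P(q,E)$ is in general only locally single-valued, so the integral defining $Q$ is typically multivalued. The careful part of the argument is therefore to restrict to a chart of phase space on which $\partial_p H\neq0$ and the inversion is single-valued --- the ``appropriate phase space'' of the statement --- and to observe that switching to the complementary inversion $q=q(p,E)$ covers the region where $\partial_q H\neq0$, so that the two local constructions together furnish $Q$ wherever the hypothesis holds.
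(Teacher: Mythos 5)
Your proposal is correct and follows essentially the same route as the paper: both construct $Q$ as the elapsed-time quadrature $\int dq/(\partial H/\partial p)$ with the energy slot filled in by $H(p,q)$, so that $I=-t+Q$ arises as the integration constant, and both close by verifying $\{H,Q\}=1$. The only cosmetic differences are that the paper checks conjugacy by noting $dQ/dt=1$ along trajectories while you compute the bracket directly via the chain rule in $(q,E)$ variables, and your discussion of turning points and of the complementary inversion $q=q(p,E)$ simply makes explicit the paper's ``appropriate phase space'' caveat and its two coordinates $Q_{1}$, $Q_{2}$.
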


\begin{proof}
The Hamilton equations of the Hamiltonian $H(p,q)$ are
\begin{eqnarray}
&&\dot{q}=\frac{\partial H}{\partial p}=f_1(p,q),\quad
\dot{p}=-\frac{\partial H}{\partial q}=f_2(p,q).\label {lam101}
\end{eqnarray}
Now inverting the Hamiltonian $H(p,q)$ in terms of $p$ or $q$ and substituting the resultant expression into the right hand side of the $\dot{q}$ or $\dot{p}$ equation we get
\begin{eqnarray}
&&\dot{q}=f_1(p,q)=f_3(q,H),\quad \mbox{or} \quad \dot{p}=f_2(p,q)=f_4(p,H).\label {lam102}
\end{eqnarray}
Integrating the above equation we get
\begin{eqnarray}
&&I+t=\int \frac{dq}{f_3(q,H)}=Q_{1}(p,q),\quad \mbox{or} \quad 
I+t=\int \frac{dp}{f_4(p,H)}=Q_{2}(p,q)\label {lam103}
\end{eqnarray}
where $I$ is the integration constant and $Q_{1}$ and $Q_{2}$ are two of the possible collective coordinates (where the phase space chosen such that the coordinates are single valued and non-singular).  The latter fact can be easily proved by noting that the total differentiation of any one of the $Q(p,q)$ yields on using (\ref{lam101}),
\begin{eqnarray}
\frac{d Q}{d t}=\dot{q}\frac{\partial Q}{\partial q}+\dot{p}\frac{\partial Q}{\partial p}=\{H,Q\}=1.\label {lam104}
\end{eqnarray}
Thus $Q_{1}$ or $Q_{2}$ is canonically conjugate to $H$ and can serve as the required collective coordinate.
\end{proof}

Now considering the above Hamiltonian $H$ as a new momentum and $Q$ as a new collective coordinate and substituting these into the $\Omega$-modified isochronous Hamiltonian $H_{(1)}$ given by Calogero and Leyvraz \cite{Calogero:07,Calogero:08a},
\begin{eqnarray}
&&H_{(1)}=\frac{1}{2}(H(p,q)^2+\Omega^2 Q(p,q)^2),\label {lam105}
\end{eqnarray}
one can show that the dynamical system in $p$ and $q$ is indeed isochronous. This has been proved in \cite{Calogero:07,Calogero:08a}. 

\subsubsection*{Example:1}

Let us consider $H=pq$, then $Q=\log q$. Then the $\Omega$-modified Hamiltonian $H_{(1)}=\frac{1}{2}(p^2q^2+\Omega^2 (\log q)^2)$  is isochronous.

Next we note that the collective coordinate $Q_{(1)}=\frac{1}{\Omega}\tan^{-1}\bigg(\frac {\Omega Q}{H}\bigg)$, confined to the principal branch of the right hand side, is conjugate to the Hamiltonian $H_{(1)}$ which is obtained using Theorem 1, and Eq.(0.3).  Then we note the following theorem.

\begin{thm} Let $H_{(1)}-a_{(1)}$ and $Q_{(1)}=\frac{1}{\Omega}\tan^{-1}\bigg(\frac {\Omega Q}{H}\bigg)$, where the latter is confined to the principal branch of the arctan function, be the new momentum and the collective coordinates, respectively,  then the $\Omega_{(1)}$- modified Hamiltonian $H_{(2)}=\frac{1}{2}[(H_{(1)}-a_{(1)})^2+\Omega_{(1)}^2 Q_{(1)}^2]$ also has isochronous dynamics, where $a_{(1)}$ and  $\Omega_{(1)}$ are suitable arbitrary positive system parameters. 
\end{thm}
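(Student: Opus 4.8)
The plan is to reduce Theorem~2 to the Calogero--Leyvraz construction, which, as recalled in the excerpt, is already established: for \emph{any} Hamiltonian $\mathcal{H}$ together with a coordinate $\mathcal{Q}$ canonically conjugate to it, $\{\mathcal{H},\mathcal{Q}\}=1$, the $\Omega$-modified Hamiltonian $\frac{1}{2}(\mathcal{H}^2+\Omega^2\mathcal{Q}^2)$ has isochronous dynamics. Taking $\mathcal{H}=H_{(1)}-a_{(1)}$ and $\mathcal{Q}=Q_{(1)}$, the entire content of Theorem~2 collapses to a single verification, namely that $H_{(1)}-a_{(1)}$ and $Q_{(1)}$ form a canonical pair, $\{H_{(1)}-a_{(1)},Q_{(1)}\}=1$. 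Since $a_{(1)}$ is a constant it drops out of the Poisson bracket, so it is enough to prove $\{H_{(1)},Q_{(1)}\}=1$.

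The step that makes this tractable is the observation that both $H_{(1)}=\frac{1}{2}(H^2+\Omega^2Q^2)$ and $Q_{(1)}=\frac{1}{\Omega}\tan^{-1}(\Omega Q/H)$ depend on $(p,q)$ only through the pair $(H,Q)$. Because Theorem~1 supplies $\{H,Q\}=1$, the map $(p,q)\mapsto(H,Q)$ is canonical, and for functions $F(H,Q)$ and $G(H,Q)$ the chain rule for Poisson brackets reduces everything to the $(H,Q)$ variables,
\begin{eqnarray}
\{F,G\}=\left(\frac{\partial F}{\partial H}\frac{\partial G}{\partial Q}-\frac{\partial F}{\partial Q}\frac{\partial G}{\partial H}\right)\{H,Q\}=\frac{\partial F}{\partial H}\frac{\partial G}{\partial Q}-\frac{\partial F}{\partial Q}\frac{\partial G}{\partial H}.\nonumber
\end{eqnarray}
I would then substitute the derivatives $\partial H_{(1)}/\partial H=H$ and $\partial H_{(1)}/\partial Q=\Omega^2Q$, together with $\partial Q_{(1)}/\partial Q=H/(H^2+\Omega^2Q^2)$ and $\partial Q_{(1)}/\partial H=-Q/(H^2+\Omega^2Q^2)$ obtained by differentiating the arctangent. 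Inserting these makes the bracket collapse to $(H^2+\Omega^2Q^2)/(H^2+\Omega^2Q^2)=1$, which establishes the canonical conjugacy and lets the Calogero--Leyvraz result be invoked verbatim to conclude that $H_{(2)}$ is isochronous.

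The bracket computation is routine; the care lies elsewhere. First, $Q_{(1)}$ must be confined to the principal branch of $\tan^{-1}$, as stated, so that it remains single valued and non-singular, and one must stay in the region where $H\neq0$ so that the argument $\Omega Q/H$ is well defined---precisely the caveat already attached to $Q$ in Theorem~1. Second, the shift by $a_{(1)}$ is not cosmetic: since $H_{(1)}=\frac{1}{2}(H^2+\Omega^2Q^2)\ge0$, the quantity $H_{(1)}$ by itself cannot play the role of a momentum that changes sign, whereas a positive $a_{(1)}$ lets the new momentum $H_{(1)}-a_{(1)}$ take both signs, so that the orbits of $H_{(2)}$ in the $(H_{(1)}-a_{(1)},Q_{(1)})$ plane close up and the isochrony genuinely holds. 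I expect the main obstacle to be making these domain and phase-space requirements precise---identifying the ranges of $a_{(1)}$ and $\Omega_{(1)}$ for which $(H_{(1)}-a_{(1)},Q_{(1)})$ is a bona fide canonical pair on a phase space supporting closed trajectories---rather than the Poisson-bracket identity itself.
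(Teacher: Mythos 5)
Your proof is correct, but it takes a genuinely different route from the paper's. You reduce Theorem 2 entirely to the Calogero--Leyvraz theorem, so the whole burden falls on the single identity $\{H_{(1)}-a_{(1)},Q_{(1)}\}=1$, which you verify by the chain rule in the $(H,Q)$ variables; that computation checks out, and it is arguably the cleanest way to see why the construction can be iterated indefinitely. The paper proceeds differently: the conjugacy of $Q_{(1)}$ to $H_{(1)}$ is not computed inside the proof at all --- it is asserted just before the theorem as a consequence of applying Theorem 1's integration procedure to $H_{(1)}$ (which is how the $\tan^{-1}$ form arises in the first place) --- and the proof itself re-derives isochrony from scratch by writing the harmonic solutions $H_{(1)}-a_{(1)}=A_{(1)}\cos[\Omega_{(1)}t+\delta_{(1)}]$, $Q_{(1)}=\frac{A_{(1)}}{\Omega_{(1)}}\sin[\Omega_{(1)}t+\delta_{(1)}]$ and explicitly inverting them to closed-form expressions for $H$ and $Q$, hence for $p$ and $q$. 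What the paper's longer route buys is twofold: the explicit formulas are exactly what gets reused in the subsequent recursive scheme and in Example 2, and the inversion makes the precise condition $a_{(1)}>|A_{(1)}|$ (so that the quantity under $\sqrt{2a_{(1)}+2A_{(1)}\cos(\Omega_{(1)}t+\delta_{(1)})}$ stays positive for all $t$) fall out automatically; this amplitude restriction on initial data, not merely positivity of the parameter $a_{(1)}$, is the concrete content behind ``suitable arbitrary positive system parameters.'' Your proposal correctly flags this domain issue as the remaining obstacle --- your observation that $H_{(1)}\ge 0$ is what forces the shift by $a_{(1)}$ is exactly the right intuition --- but leaves it unresolved, whereas the paper's explicit construction resolves it; conversely, your bracket computation makes fully explicit a fact the paper leaves implicit in its appeal to Theorem 1.
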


\begin{proof}
From the nature of $H_{(2)}$, the solutions for $H_{(1)}-a_{(1)}$ and $Q_{(1)}$ are written as
\begin{eqnarray}
&&H_{(1)}-a_{(1)}=A_{(1)}\cos[\Omega_{(1)} t+\delta_{(1)}],\qquad Q_{(1)}=\frac{A_{(1)}}{\Omega_{(1)}}\sin[\Omega_{(1)} t+\delta_{(1)}],\label {lam106}
\end{eqnarray}
where $A_{(1)}$ and $\delta_{(1)}$ are arbitrary constants.
Substituting equation (\ref{lam106}) into the expressions for $H_{(1)}$ (equation (\ref{lam105})) and $Q_{(1)}$ and inverting we get the solutions for $H$ and $Q$ in the forms
\begin{eqnarray}
&&H=\sqrt{2a_{(1)}+2A_{(1)}\cos(\Omega_{(1)} t+\delta_{(1)})}
\cos[\frac{\Omega A_{(1)}}{\Omega_{(1)}}\sin(\Omega_{(1)} t+\delta_{(1)})]\nonumber\\
&&Q=\frac{1}{\Omega}\sqrt{2a_{(1)}+2A_{(1)}\cos(\Omega_{(1)} t+\delta_{(1)})}\sin[\frac{\Omega A_{(1)}}{\Omega_{(1)}}\sin(\Omega_{(1)} t+\delta_{(1)})].\label {lam107}
\end{eqnarray}
These solutions evolve periodically with period $T=2\pi/\Omega_{(1)}$, for $a_{(1)}>|A_{(1)}|$ so that the quantity inside the square root remain positive for all times. The expressions for $p$ and $q$ can be obtained upon inverting $H$ and $Q$. Now, as we already know that $H$ and $Q$ evolve periodically, it is obvious that $p$ and $q$ must also evolve periodically with the same period, namely $T_{(1)}=2\pi/\Omega_{(1)}$, but in general different from $T_{(0)}=2\pi/\Omega$.
\end{proof} 

From the above Theorem 2, one can identify recursively the $\Omega_{(i)}$ modified Hamiltonian from $H_{(i+1)}=\frac{1}{2}[(H_{(i)}-a_{(i)})^2+\Omega_{(i)}^2 Q_{(i)}^2]$, $i=0,1,2......n$, where $H_{(0)}=H$, $Q_{(0)}=Q$, $\Omega_{(0)}=\Omega$, and  $a_{(0)}=0$.  Here $H_{(i)}-a_{(i)}$ and $Q_{(i)}$ are the new momentum and its corresponding canonically 
conjugate/collective coordinate, respectively. All the above systems yield periodic solutions with period $T_{(i)}=2\pi/\Omega_{(i)}$ and they can be deduced using the relations,

\begin{eqnarray}
H_{(i)}=a_{(i)}+\sqrt{2 H_{(i+1)}}\cos[\Omega_{(i)}Q_{(i+1)}],\; Q_{(i)}=\frac{1}{\Omega_{(i)}}(\sqrt{2 H_{(i+1)}}\sin[\Omega_{(i)}Q_{(i+1)})],\;i=0,1,...n\nonumber
\end{eqnarray}
 and also
\begin{eqnarray}
 H_{(n)}=a_{(n)}+A_{(n)}\cos[\Omega_{(n)} t+\delta_{(n)}],\qquad Q_{(n)}=\frac{A_{(n)}}{\Omega_{(n)}}\sin[\Omega_{(n)} t+\delta_{(n)}]. 
\end{eqnarray}
Here $A_{(n)}$ and $\delta_{(n)}$ are arbitrary constants, $\Omega_{(i)}$ and $a_{(i)}$, $i=0,1,2......n$, are \\
system (arbitrary) parameters. Note that for the solution to remain real, one has to impose the condition $a_{i-1}>\sqrt{2(a_{i}+X_{i})}=X_{i-1}$ and $a_{(n)}>|A_{n}|=X_n$, $i=1,2, ... n$. Using the above periodic solutions, one can easily see that the canonical variables $p$ and $q$ also evolve periodically with period $T_{n}=2\pi/\Omega_{(n)}$.

We now illustrate the above recursive procedure with an example.

\subsubsection*{Example:2} Let us consider the Hamiltonian $H=p^{n}g(q)$, where $g(q)$ is an arbitrary function of $q$, for which the Hamilton's equations can be written as
\begin{eqnarray}
&&\dot{q}=\frac{\partial H}{\partial p}=np^{n-1}g(q),\quad \dot{p}=-\frac{\partial H}{\partial q}=-p^ng'(q).\label {ex101}
\end{eqnarray}
Here $g'(q)=\frac{dg}{dq}$.
Note that the integration of the equation $dp/dq=-pg'(q)/(ng(q))$ (vide equation (\ref{ex101})) gives the integration constant $I=p^{n}g(q)$ which is nothing but the Hamiltonian $H$.
From the Hamiltonian we get $p=(H/g(q))^{\frac{1}{n}}$ and substituting this expression into the $\dot{q}$ equation, we obtain
\begin{eqnarray}
&&\dot{q}=n(H/g(q))^{\frac{n-1}{n}}g(q)=n(H)^{\frac{n-1}{n}}g(q)^{\frac{1}{n}}.\label {ex102}
\end{eqnarray}
Integrating (\ref{ex102}) we get
\begin{eqnarray}
&&I+t=\frac{g(q)^{\frac{1-n}{n}}}{np^{n-1}}\int g(q)^{-\frac{1}{n}}dq.\label {ex103}
\end{eqnarray}
Now the collective coordinate $Q(p,q)$ is of the form
\begin{eqnarray}
&&Q(p,q)=\frac{g(q)^{\frac{1-n}{n}}}{np^{n-1}}\int g(q)^{-\frac{1}{n}}dq,\label {ex104}
\end{eqnarray}
which is conjugate to the Hamiltonian, that is $\{H,Q\}=1$. This is in conformity with Theorem 1.

For simplicity let us consider the case $n=1$ and $g(q)=q$. In this case the new momentum and the collective coordinate are written as $H=pq$ and $Q=\log(q)$. Substituting these into the $\Omega$ - modified Hamiltonian $H_{(1)}$ given in (\ref{lam105}) we get 
\begin{eqnarray}
&&H_{(1)}=\frac{1}{2}((pq)^2+\Omega^2 \log(q)^2).\label {lam105a}
\end{eqnarray}
Using the procedure given in \cite{Calogero:07,Calogero:08a} or following our procedure given above the solution for $p$ and $q$ now become
\begin{eqnarray}
&&p(t)=A\cos(\Omega t+\delta)e^{-\frac{A}{\Omega}\sin(\Omega t+\delta)},\qquad q(t)=e^{\frac{A}{\Omega}\sin(\Omega t+\delta)}\label {lam105b}
\end{eqnarray} 
which are periodic with period $T=2\pi/\Omega$, so the system for $p$ and $q$ is isochronous.

Now consider the  $\Omega$ - modified Hamiltonian $H_{(1)}-a_{(1)}$ as the new momentum and $Q_{(1)}=\frac{1}{\Omega}\tan^{-1}\bigg(\frac{\Omega Q}{H}\bigg)$ as the collective coordinate in the  $\Omega_{(1)}$- modified Hamiltonian $H_{(2)}$, that is,
\begin{eqnarray}
H_{(2)}=\frac{1}{2}\bigg(\bigg[\frac{1}{2}((pq)^2+\Omega^2 \log(q)^2)-a_{(1)}\bigg]^2+\Omega_{(1)}^2 \bigg[\frac{1}{\Omega}\tan^{-1}\bigg(\frac{\Omega \log(q)}{pq}\bigg)\bigg]^2\bigg).\label {lam106a}
\end{eqnarray}
Now we can obtain the solutions for $p$ and $q$ as
\begin{eqnarray}
&&p(t)=\sqrt{2 a_{(1)}+2A_{(1)}\cos[\Omega_{(1)} t+\delta_{(1)}]}
\cos[\frac{\Omega A_{(1)}}{\Omega_{(1)}}\sin(\Omega_{(1)} t+\delta_{(1)})]/q(t),
\nonumber\\
&& q(t)=e^{\frac{1}{\Omega}\sqrt{2 a_{(1)}+2A_{(1)}\cos[\Omega_{(1)} t+\delta_{(1)}]}\sin[\frac{\Omega A_{(1)}}{\Omega_{(1)}}\sin(\Omega_{(1)} t+\delta_{(1)})]}.\label {lam106b}
\end{eqnarray} 
Choosing the arbitrary parameters such that $a_{1}>|A_{1}|$, the system for $p$ and $q$ is isochronous since the solution (\ref{lam106b}) is periodic with period $T=2\pi/\Omega_{(1)}$. This is in conformity with Theorem 2. 

Then we may extended the above analysis to the $\Omega_{(2)}$ modified Hamiltonian.  In this case   $H_{(2)}-a_{(2)}$ can be taken as the  momentum and $Q_{(2)}=\frac{1}{\Omega_{(1)}}\tan^{-1}\bigg(\frac{\Omega_{(1)} Q_{(1)}}{H_{(1)}-a_{(1)}}\bigg)$ as the conjugate coordinate and  therefore   
\begin{eqnarray}
H_{(3)}=& \,\frac{1}{2}\bigg[\frac{\Omega_{(2)}^2}{\Omega_{(1)}^2} \tan^{-1}[\frac{2\Omega_{(1)}\tan^{-1}[\frac{\Omega \log(q)}{pq}]}{\Omega((pq)^2+\Omega^2\log(q)^2-2a_{(1)})}]^2 +\frac{1}{4}\bigg(\frac{\Omega_{(1)}^2}{\Omega^2 }\tan^{-1}[\frac{\Omega \log(q)}{pq}]^2 \nonumber\\ &\qquad\qquad -2 a_{(2)}
+\frac{1}{4}  ((pq)^2+\Omega^2 \log(q)^2-2a_{(1)})^2 
\bigg)^2 \bigg].
\end{eqnarray}
 The solutions for $p$ and $q$ can now be written as
\begin{eqnarray}
p(t)= \bigg(\sqrt{2} \sqrt{ a_{(1)} + f_{(1)}\cos (f_{(2)})} 
  \cos (f_{(3)})\bigg )/q(t),\nonumber
\end{eqnarray}
\begin{eqnarray}
 q(t)=e^{\frac{1}{\Omega}\sqrt{2} \sqrt{ a_{(1)} + f_{(1)}\cos (f_{(2)}) } 
 \sin (f_{(3)})},
\end{eqnarray}
where $f_{(1)}= \bigg(2( a_{(2)}+A_{(2)} \cos[\Omega_{(2)} t + \delta_{(2)}] )\bigg)^{1/2}$ , $f_{(2)}= A_{(2)}\Omega_{(1)} \sin[\Omega_{(2)}t+\delta_{(2)}]/(\Omega_{(2)})$ and $f_{(3)}= (\Omega/\Omega_{(1)})f_{(1)}\sin[f_{(2)}]$.  We assume here again that $a_{2}>|A_{2}|$ and $a_{1}>\sqrt{2(a_{2}+A_{2})}$ so that $p$ and $q$ are real.  Here also the canonical variables $p$ and $q$ are periodic with period $T=2\pi/\Omega_{(2)}$ confirming the isochronous character of the dynamics. Following a similar analysis, one can generate more and more isochronous Hamiltonians.

To conclude, we have proposed a simple procedure to identify the collective coordinate $Q$ which is conjugate to the given Hamiltonian $H$ in order to generate isochronous systems. Using the known Hamiltonian $H$ and collective coordinate $Q$, we have proved the possibility of generating more and more isochronous oscillator systems recursively.

The work is supported by a Department of Science and Technology (DST), Government of India, Ramanna Fellowship program and a DST--IRHPA research project, Government of
India.


\begin{thebibliography}{00}
\bibitem{Calogero:08}
Calogero F 2008 Isochronous Systems (Oxford: Oxford University Press)

\bibitem{Calogero:08f}
Calogero F  1997 A Class of integrable Hamiltonian systems whose solutions are (perhaps) all completely periodic,
{\it J. Math. Phys:} {\bf38} 5711-9

\bibitem{Calogero:08g}
Calogero F and Leyvraz F 2006 Isochronous and partially-isochronous Hamiltonian systems are not rare,
\emph{J. Math. Phys:} {\bf47} 042901:1-23

\bibitem{Calogero:08d}
Calogero F and Leyvraz F 2006 On a class of Hamiltonians with (Classical) isochronous motions and (Quantal) equispaced spectra,
\emph{J. Phys. A: Math. Theor.} {\bf39} 11803-11824

\bibitem{Calogero:08c}
Calogero F and Leyvraz F 2007 On a new technique to manufacture isochronous Hamiltonian systems: Classical and Quantal treatments,
\emph{J. Nonlinear Math. Phys.} {\bf14} 612-636

\bibitem{Calogero:07}
Calogero F and Leyvraz F 2007 General technique to produce isochronous Hamiltonian,
\emph{J. Phys. A: Math. Theor.} {\bf40} 12931-44


\bibitem{Calogero:08a}
Calogero F and Leyvraz F 2008 Examples of isochronous Hamiltonian: classical and quantal treatments,
\emph{J. Phys. A: Math. Theor.} {\bf41} 175202

\bibitem{vkc:09}
 Chandrasekar V.K,  Senthilvelan M, and  Lakshmanan M Unusual Lienard-type nonlinear oscillator,
\emph{Phys. Rev. E } {\bf72} 066203

\bibitem {vkc:09a}
Chandrasekar V.K,  Senthilvelan M , Anjan Kundu and Lakshmanan M A nonlocal connection between certain linear and nonlinear ordinary differential equation/oscillators,
\emph{J. Phys. A: Math. Theor.}{\bf 39} 9743-9754


\bibitem{Guha:09}
Guha P and Choudhury A G 2009 Symplectic rectification and isochronous Hamiltonian systems,
\emph{J. Phys. A: Math. Theor.} {\bf42} 192001

\end{thebibliography}
\end{document}